\documentclass[a4paper,UKenglish,cleveref, autoref, thm-restate]{lipics-v2021}
\usepackage{algorithm}
\usepackage{algpseudocode}
\renewcommand{\phi}{\varphi}
\newcommand{\qbf}[1]{$#1$-QBF}
\newcommand{\feqbf}[1]{$\forall\exists\,#1$-QBF}

\usepackage{todonotes}

\title{d-QBF with Few Existential Variables Revisited} 

\author{Andreas Grigorjew}
    {LAMSADE, CNRS UMR7243, Université Paris Dauphine-PSL, 75775 Paris, France}
    {research@andreasgrigorjew.de}
    {https://orcid.org/0000-0003-0989-2415}{}
\author{Michael Lampis}
    {LAMSADE, CNRS UMR7243, Université Paris Dauphine-PSL, 75775 Paris, France}
    {michail.lampis@dauphine.fr}
    {https://orcid.org/0000-0002-5791-0887}{}
\authorrunning{A. Grigorjew and M. Lampis}
\Copyright{Andreas Grigorjew and Michael Lampis}

\ccsdesc[100]{\textcolor{red}{Replace ccsdesc macro with valid one}} 
\keywords{QBF, FPT algorithms, ETH} 
\category{} 

\relatedversion{} 

\funding{This work was supported by ANR project ANR-21-CE48-0022 (S-EX-AP-PE-AL).}
\nolinenumbers 
\hideLIPIcs

\EventEditors{John Q. Open and Joan R. Access}
\EventNoEds{2}
\EventLongTitle{42nd Conference on Very Important Topics (CVIT 2016)}
\EventShortTitle{CVIT 2016}
\EventAcronym{CVIT}
\EventYear{2016}
\EventDate{December 24--27, 2016}
\EventLocation{Little Whinging, United Kingdom}
\EventLogo{}
\SeriesVolume{42}
\ArticleNo{23}

\begin{document}

\maketitle

\begin{abstract}

Quantified Boolean Formula (QBF) is a notoriously hard generalization of
\textsc{SAT}, especially from the point of view of parameterized complexity,
where the problem remains intractable for most standard parameters. A recent
work by Eriksson et al.~[IJCAI 24] addressed this by considering the case where
the propositional part of the formula is in CNF and we parameterize by the
number $k$ of existentially quantified variables. One of their main results was
that this natural (but so far overlooked) parameter does lead to
fixed-parameter tractability, if we also bound the maximum arity $d$ of the
clauses of the given CNF. Unfortunately, their algorithm has a
\emph{double-exponential} dependence on $k$ ($2^{2^k}$), even when $d$ is an
absolute constant. Since the work of Eriksson et al.\ only complemented this with
a SETH-based lower bound implying that a $2^{O(k)}$ dependence is impossible,
this left a large gap as an open question.

Our main result in this paper is to close this gap by showing that the
double-exponential dependence is optimal, assuming the ETH: even for CNFs of
arity $4$, QBF with $k$ existential variables cannot be solved in time
$2^{2^{o(k)}}|\phi|^{O(1)}$.  Complementing this, we also consider the further
restricted case of QBF with only two quantifier blocks ($\forall\exists$-QBF).
We show that in this case the situation improves dramatically: for each $d\ge
3$ we show an algorithm with running time $k^{O_d(k ^{d-1})}|\phi|^{O(1)}$ and
a lower bound under the ETH showing our algorithm is almost optimal.

\end{abstract}

\section{Introduction}

The Quantified Boolean Formula problem (QBF) is the following: we are given a
formula of the form $Qx_1Qx_2\ldots Qx_n \phi(x_1,x_2,\ldots,x_n)$, where
$x_1,\ldots,x_n$ are boolean variables, each $Q$ is a quantifier ($\exists$ or
$\forall$) and $\phi$ is propositional, and we want to decide if the formula
evaluates to True. QBF is the prototypical PSPACE-complete problem
and clearly generalizes SAT, which is the special case when all quantifiers are
existential. Because of its great generality, QBF captures a large class of
important problems \cite{BeyersdorffJLS21,DBLP:series/faia/336,GiunchigliaMN21,SebastianiT21,ShuklaBPS19}, but at the same time it is considerably more difficult to
solve than SAT, both in theory and in practice \cite{MarinNPTG16}. 

In this paper we focus on the complexity of solving QBF through the lens of
parameterized algorithms\footnote{We assume the reader to be familiar with the
basics of parameterized complexity, as given in standard textbooks
\cite{CyganFKLMPPS15}.}. QBF is notoriously hard from this point of view as
well and most standard parameters which allow tractable algorithms for SAT,
such as treewidth, do not work for QBF (we review some of the relevant
literature below). Nevertheless, a notable recent work of Eriksson, Lagerkvist,
Ordyniak, Osipov, Panolan, and Rychlicki \cite{ErikssonLOOPR24} succeeded in
discovering a rare island of tractability for this problem by considering a
natural but so far overlooked parameter: the number of existentially quantified
variables.  Their main results can be summarized as follows: if a QBF instance
has $\phi$ given in CNF and only contains at most $k$ existential variables,
then the instance can be decided in time roughly $m^{2^k}$, where $m$ is the
number of clauses; while if all clauses of $\phi$ also have arity at most
$d=O(1)$, this running time can be improved to roughly $2^{2^k}|\phi|^{O(1)}$.
The former result is therefore an XP algorithm for parameter $k$ (the number of
existential variables), while the latter is a (rare!) FPT algorithm for QBF
with a natural parameter. 

Even though the results of \cite{ErikssonLOOPR24} are refreshing in the sense
that they uncover a rare natural parameter for which non-trivial algorithmic
results can be obtained for QBF, their \emph{double-exponential} running time
is a major drawback.  This is, however, inevitable for formulas of unbounded
arity, as \cite{ErikssonLOOPR24} supplies a reduction establishing that, under
the Exponential Time Hypothesis (ETH), the $m^{2^k}$ algorithm is optimal.
Nevertheless, in the bounded-arity case the same work only establishes a much
weaker lower bound: no algorithm can solve QBF of arity $3$ in time
$c^k|\phi|^{O(1)}$, for any constant $c$, unless the \emph{Strong} Exponential
Time Hypothesis is false.  This lower bound is quite unsatisfying, as it both
leaves a huge gap in complexity with respect to the upper bound and also relies
on a much stronger (and less widely believed \cite{Williams19}) complexity
hypothesis than the lower bound for CNFs of unbounded arity.

\subparagraph*{Our results} The question of whether QBF with $k$ existential
variables on CNFs of constant arity can be solved faster than with a
double-exponential dependence is stated explicitly in \cite{ErikssonLOOPR24},
where the authors state that it is not clear which direction one should try to
improve. Our main contribution here is to settle this question by proving the
following: assuming the ETH is true, there is no algorithm that can solve QBF
in time $2^{2^{o(k)}}|\phi|^{O(1)}$, even if the propositional part is given in
CNF where all clauses have arity at most $4$. This settles the main question of
\cite{ErikssonLOOPR24} by showing that their algorithm is essentially optimal
and significantly improves their lower bound while relying on a weaker
hypothesis.

Complementing these results we also consider an interesting special case of QBF
where the input formula only has two quantifier blocks, which we denote
$\forall\exists$-QBF. Our motivation for considering this restriction is that
both of the lower bounds given in \cite{ErikssonLOOPR24} already apply to this
special case, so it is interesting to ask if our lower bound also applies.

What we discover is that $\forall\exists$-QBF has a quite different complexity
from general QBF when $d$ is bounded: we show that for all $d\ge 3$ we can
solve $\forall\exists$-QBF in time $k^{O_d\left(k^{d-1}\right)}$, where $O_d$
hides factors depending on $d$, significantly improving upon the
double-exponential dependence of the general case. Furthermore, we show that
our algorithm is essentially optimal, up to a logarithmic factor in the
exponent: for all $d\ge 3$, if $\forall\exists$-QBF could be solved in time
$2^{o(k^{d-1})}|\phi|^{O(1)}$, then the ETH would be false. Observe that this
lower bound also significantly improves the SETH-based lower bound of
\cite{ErikssonLOOPR24}, which proved that $\forall\exists$-QBF of arity $3$
cannot be solved in time $c^k$, as our lower bound in this case is
$2^{o(k^2)}|\phi|^{O(1)}$. These results paint a nice contrast with the case of
unbounded $d$, where \cite{ErikssonLOOPR24} showed a lower bound for
$\forall\exists$-QBF essentially matching their algorithm for general QBF. Our
results imply that for bounded $d$, QBF is strictly harder than
$\forall\exists$-QBF for parameter $k$. 

\subparagraph*{Techniques} For our main (double-exponential) lower bound, our
intuition begins with a reduction given in \cite{LampisM17}, which takes as
input a DNF formula $\phi$ on $n$ variables and $m$ clauses and produces a
$\forall\exists$-QBF instance $\phi'$ on $n+\log m$ variables by adding
$\log m$ new existential variables\footnote{The reduction of \cite{LampisM17}
is actually stated from CNF formulas to $\exists\forall$-QBF, but the
description here is equivalent if we take the complement of all formulas.}. The
reduction ensures the following strong equivalence: for all assignments
$\sigma$ to the common variables of $\phi$ and $\phi'$, we have that $\phi$
evaluates to True for $\sigma$ if and only if $\phi'$ (simplified under
$\sigma$) can be satisfied by some assignment to the $\log m$ new existential
variables. This reduction \emph{almost} works for our purposes, since the
number of existential variables $k$ it produces is $\log m$, so it proves that
any algorithm must have double-exponential complexity in $k$ (otherwise our
reduction would allow us to decide if $\phi$ is valid, which is equivalent to
SAT for $\neg \phi$, in sub-exponential time, refuting the ETH).  The problem
is that the reduction of \cite{LampisM17} produces CNFs with
\emph{non-constant} arity.  More precisely, the reduction represents each term
of $\phi$ with a group of clauses that contain \emph{all} existential
variables, but with a different negation pattern for each term. As a result,
the strategy of the existential player consists of selecting a single such
group of clauses that will be satisfied by their universally quantified
literals, that is, of selecting a term of $\phi$ that evaluates to True.

The main technical obstacle in extending the reduction of \cite{LampisM17} is
therefore to find a way to decrease the arity. We achieve this by adding to the
formula $O(\sqrt{m})$ new universal variables, whose values are supposed to be
fully determined by the values of the $\log m$ existential variables we have
added. In particular, the universal player is ``supposed to'' set exactly two
of these variables to True, and the pair she selects must encode the assignment
to the existential variables. If this happens, it is easy to replace each group
of $\log m$ variables in a clause by a pair of new universal variables.  The
problem, of course, is that the universal player may decide to give values to
these variables which do not agree with what we have in mind. We therefore
construct a new DNF formula $\psi$ which expresses the constraint ``the
universal player cheated on one of the $O(\sqrt{m})$ new variables''. The key
idea now is that we can apply our construction recursively on this new DNF.
Because $\psi$ has size $O(\sqrt{m})$, the number of new existential variables
we use will decrease by a constant factor, hence the total number of
existential variables we introduce until the formula becomes small enough to
brute-force is a geometric series that sums up to $O(\log m)$.  This implies
our double-exponential lower bound under the ETH.


For our algorithmic result for $\forall\exists$-QBF we start with an approach
similar to \cite{ErikssonLOOPR24}, partitioning the clauses of the input
formula into classes which agree on all existential variables (hence
$O_d(k^{d-1})$ classes, for clauses containing at least one universal
variable).  However, whereas \cite{ErikssonLOOPR24} attempts to find sunflowers
in such classes, we attempt to find a large collection of clauses that share no
universal variable. Our motivation is that, unlike \cite{ErikssonLOOPR24}, we
do not attempt to use such a structure to remove a single clause, but rather to
argue (via the probabilistic method) that if all classes have such large
collections then the universal player has an excellent strategy: she can assign
values to her variables so that for all classes, the existential literals
remain in one of the clauses. If we are unable to find a large enough
collection in one class, this gives us a small hitting set of the clauses of
this class, on which we perform a simple branching algorithm.

Finally, for the lower bounds that prove that the above algorithm is almost
optimal, we again rely on the strategy of the reduction of \cite{LampisM17},
but this time instead of introducing $\log m$ existential variables we
introduce $m^{\frac{1}{d-1}}$ such variables. The idea is that a group of $d-1$
such variables will identify a term of the original formula, and together with
a literal of that term we obtain a clause of arity $d$.

\subparagraph*{Related work} As mentioned, QBF is  much harder than SAT from
the point of view of parameterized complexity. Notably, QBF is PSPACE-complete
for formulas of constant pathwidth \cite{AtseriasO14}. If we restrict the
number of quantifier alternations the problem does become FPT parameterized by
treewidth \cite{Chen04}, but the parameter dependence is a tower of
exponentials of height equal to the number of alternations
\cite{FichteHP20,LampisMM18,LampisM17,PanV06}. This has motivated the study of
QBF under extremely restricted parameters such as primal vertex cover, for
which the problem does become FPT \cite{EibenGO20}, albeit with a
double-exponential dependence on the parameter which is already optimal with
one quantifier alternation \cite{LampisM17}. It has also motivated the study of
structural parameters which take into account the quantification ordering
\cite{EibenGO20}, backdoor-type parameters \cite{SamerS09}, and parameters that
fall between vertex cover and treewidth \cite{FichteGHSO23}.

\section{Preliminaries}
\label{sec:prelim}

We use standard notation and recall some basic notions. A \qbf{d} is a formula
of the form $\phi = (Q_1x_1)\dots(Q_nx_n)\psi$, where $Q_i \in \{ \exists,
\forall \}$ and $\psi$ is in $d$-CNF. Recall that a propositional boolean
formula is in $d$-CNF if it is a conjunction of clauses of literals, where each
clause contains at most $d$ literals. Conversely, a formula is in $d$-DNF if it
is a disjunction of terms, where each term is a conjunction of at most $d$
literals. We will sometimes view clauses (or terms) as sets of literals, and
CNF (or DNF) formulas as sets of clauses (or terms respectively). By \feqbf{d}
we denote a \qbf{d} $\phi$ where $Q_1,\dots,Q_{n-k} = \forall$ and $Q_{n-k+1},
\dots,Q_n = \exists$. We write shorthand $\phi = \forall \mathbf{y} \, \exists
\mathbf{x} \psi$, where the notation $\mathbf{x}$ is meant to indicate that
$\mathbf{x}$ is a tuple (or vector) of boolean variables. For an assignment
$\sigma$ to (some of) the variables of $\phi$, we denote by $\phi(\sigma)$ the
formula obtained by applying the assignment to $\phi$: if a literal evaluates
to False, it gets removed from the clause, and if a literal evaluates to True,
we remove the clause from the formula. We also write $\phi(\mathbf{x})$ to
indicate that $\phi$ is a formula with variables in $\mathbf{x}$. We will
sometimes view QBF as a game between two players, the universal and the
existential player. At turn $i$, the appropriate player according to $Q_i$
assigns a value to the variable $x_i$, and the existential player's goal is to
arrive at an assignment that evaluates to True.  It is not hard to see that the
existential player has a winning strategy if and only if the quantified formula
evaluates to True.

The Exponential Time Hypothesis (ETH) states that there exists $c>1$ such that
3-SAT cannot be solved in time $c^n$ \cite{ImpagliazzoPZ01}, while its strong
variant (SETH) states that for all $\varepsilon>0$ there exists $k$ such that
$k$-SAT cannot be solved in time $(2-\varepsilon)^n$ \cite{ImpagliazzoP01}. All
our lower bounds in this paper rely on the ETH (which is considered more solid
than the SETH). To ease notation, we will use the slightly more informal
variation of the ETH stating that 3-SAT cannot be solved in time $2^{o(n)}$.
Furthermore, the sparsification lemma of \cite{ImpagliazzoPZ01} implies that if
the ETH is true, then 3-SAT cannot be solved in time $2^{o(n)}$, even for
instances with $m=O(n)$ clauses.

\section{Lower Bounds}
\label{sec:lower-bounds}

In this section we present two lower bounds based on the ETH. The first
(Theorem~\ref{thm:doubleLB}) shows that the double-exponential algorithm of
\cite{ErikssonLOOPR24} is essentially optimal, even if we restrict ourselves to
CNFs of arity $4$. The second (Theorem~\ref{thm:singleLB}) shows that the
algorithm we present in Section~\ref{sec:fe-qbf} is also essentially optimal, up
to poly-logarithmic factors in the exponent.

\subsection{Double-Exponential Lower Bound}

Before we proceed to the proof, let us start with a useful definition. We
define a function $B$, which takes as arguments a tuple of $n$ boolean
variables $(x_1,\ldots,x_n)$ and an integer $i\in\{0,\ldots,2^n-1\}$ and
outputs a clause on the given variables with a negation pattern that represents
the integer $i$. In particular, we set $B(0, \emptyset) = \emptyset$ as the
base case and $B(i, (x_1,\dots,x_n)) = B(\lfloor i/2 \rfloor,
(x_1,\dots,x_{n-1})) \lor  l_n $, where $l_n = x_n$ if $i$ is even, and
$l_n = \neg x_n$ otherwise. Observe that $B$ is defined in such a way that the
clause $B(i, (x_1,\dots,x_n))$ is falsified if and only if the assignment we
give to the variables corresponds to the binary representation of the integer
$i$.

\begin{theorem}\label{thm:doubleLB} There is no algorithm which, given a
\qbf{4} instance $\phi$ with n variables and $k$ existential variables, can
decide if $\phi$ is true in time $2^{2^{o(k)}}\cdot n^{O(1)}$, unless the ETH
is false.  \end{theorem}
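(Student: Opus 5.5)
We reduce from 3-SAT with $m=O(n)$ clauses (sparsification), or equivalently from validity of its negation, a 3-DNF $\phi_0=\neg F$ on $n$ variables and $m$ terms. The goal is a $\forall\exists$-style 4-QBF $\Phi$ with $n+O(\sqrt m)+\dots$ universal variables and $k=O(\log m)$ existential variables. It should be true if and only if $\phi_0$ is valid, i.e.\ iff $F$ is unsatisfiable. Then a $2^{2^{o(k)}}n^{O(1)}$ algorithm runs in time $2^{m^{o(1)}}\mathrm{poly}(n)=2^{o(n)}$, contradicting the ETH.

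The core gadget (following \cite{LampisM17}) handles a DNF $\psi$ with $M$ terms $t_0,\dots,t_{M-1}$ over universal variables $\mathbf y$. It introduces existential variables $\mathbf x$ with $|\mathbf x|=\lceil\log M\rceil$ and, for each term $t_j$ and each literal $\ell\in t_j$, the clause $B(j,\mathbf x)\lor\ell$. Under any $\mathbf x$-assignment encoding $j$, exactly the clauses of $t_j$ are active. So $\exists\mathbf x$ satisfies them iff some term is true. These clauses have arity $\log M+3$, which is too large.

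To reduce arity, split $\mathbf x$ into two halves $\mathbf x_1,\mathbf x_2$, each of $\tfrac12\log M$ bits. Add universal variables $u_a$ and $v_b$ for $a,b<\sqrt M$, which are intended to be the one-hot encodings of $\mathbf x_1$ and $\mathbf x_2$. Each long clause $B(j,\mathbf x)\lor\ell$ with $j=(a,b)$ is replaced by $\neg u_a\lor\neg v_b\lor\ell$, which has arity $3$. Note that the $u,v$ variables must be quantified \emph{after} $\mathbf x$, so the prefix becomes $\forall\mathbf y\,\exists\mathbf x\,\forall\mathbf u\mathbf v\,\exists\cdots$. The universal player may cheat in two ways. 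She may set some $u_a$ true while $\mathbf x_1\neq a$; that only hurts her if it activates a false term, so we must punish it. Or she may set no $u_a$ true, which would deactivate everything. Both forms of cheating are covered by the DNF $\chi$: ``$\exists a$: $u_a\wedge(\mathbf x_1\ne a)$'' or ``all $u_a$ false'' (and likewise for $v$). We want $\Phi$ to be satisfied whenever $\chi$ is. Allowing extra universal true values only adds constraints, and each ``$u_a\wedge \mathbf x_1$ bit $i$ differs'' is a term of arity $2$. The ``all $u_a$ false'' condition is harder. To handle it, demand instead that $u_{\mathbf x_1}$ is true, i.e.\ cheating is $\neg u_a\wedge(\mathbf x_1=a)$, whose terms have arity $1+\tfrac12\log M$. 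Here is where recursion enters. The cheat-DNF $\chi$ has $O(\sqrt M\log M)$ terms. We handle it by the same gadget recursively, with fresh existential variables $\mathbf x'$ of size $\tfrac12\log M+O(\log\log M)$, whose long $B$-patterns are again compressed by fresh universal one-hot blocks. The terms of $\chi$ involving $\mathbf x_1=a$ are themselves long, but the recursion instance can treat the pair (cheat index $a$) as its term index. In the clause for term $a$, the literals are $\neg u_a$ and the bits of $\mathbf x_1$ compared to $a$, each giving a clause $B'(a,\mathbf x')\lor\neg u_a$ or $B'(a,\mathbf x')\lor x_{1,i}^{\pm}$, compressed to arity $\le4$ with the next level's one-hot pair. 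This is where arity $4$ arises: two selector literals, plus at most two others, namely a literal of $\mathbf x$ and a cheat literal.

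The overall formula is the conjunction over all levels, with the final level being a DNF of constant size that is brute-forced directly with $O(1)$ existential variables. Correctness is proven by induction on levels. If every universal player is honest at every level, the existential strategy of choosing a true term wins. If she cheats at the first level where she cheats, the next level's existential variables select the cheat witness, and the induction continues. Conversely, if $\phi_0$ has a falsifying $\mathbf y$, the universal player plays honestly and no existential choice works. The existential count is $\sum_i O(2^{-i}\log m)=O(\log m)$, and the size is polynomial.

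The main obstacle is getting the quantifier interleaving and the ``cheat DNF'' right so that each level's cheating is detectable by a DNF of size about $\sqrt{M}$ with terms expressible in arity $4$ clauses, especially the ``no selector true'' cheat. I would first try fixing it by encoding each selector one-hot condition as $u_{\mathbf x_1}$ must be true, with violation terms indexed by $a$ so that the recursion absorbs the long equality test.
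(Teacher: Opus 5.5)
Your architecture is the paper's: the Lampis--Mitsou encoding, the two halves of $\mathbf x$ replaced by $2\sqrt M$ universal one-hot selectors quantified after $\mathbf x$, a cheat DNF of size $O(\sqrt M\log M)$ handled recursively, and a geometric sum of existential variables. However, the step that ties the compressed clauses to the cheat DNF is missing, and as written the construction is wrong.

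Your level-$0$ clauses $\neg u_a\lor\neg v_b\lor\ell$ have no escape. A universal player who sets $u_a,v_b$ for a false term therefore wins, whatever later levels do. Suppose instead you recurse on $\chi$ and take ``the conjunction over all levels''. The fresh variables of the recursive gadget do not occur in the level-$0$ clauses, so for fixed $\mathbf y,\mathbf x,\mathbf u,\mathbf v$ the remaining formula is equivalent to (level-$0$ clauses)$\,\wedge\,\chi$. Honest selector play makes $\chi$ false, so $\Phi$ is false even when $\phi_0$ is valid. Your correctness sketch breaks at the same point: under honest level-$0$ play there is no true term of $\chi$ for the level-$1$ existential player to select. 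The ``cheat literal'' in your arity count hints at a fix. But you never say what it is, where it is quantified, or what constrains it; if unconstrained, it lets the existential player always escape.

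The missing idea is an existential switch $w$, quantified right after $\mathbf u,\mathbf v$. The compressed clauses become $\ell\lor\neg u_a\lor\neg v_b\lor w$, which is where arity $4$ comes from. The recursion is then applied not to $\chi$ but to the DNF $\neg w\lor\chi$, with its fresh variables quantified after $w$. The term $\neg w$ gives the next level a legal move under honest play. It also drives the converse direction: for honest selectors the only true term of $\neg w\lor\chi$ is $\neg w$, which forces $w=0$ and hence forces the term indexed by $\mathbf x$ to be true. If you state the recursive guarantee for every assignment to the DNF's variables and induct on the DNF's size, all levels are handled at once.

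Finally, the ``no selector true'' cheat you single out as the main obstacle needs no punishment. It satisfies every compressed clause, so $w=0$ already wins. It suffices for $\chi$ to contain the arity-$2$ terms $u_a\wedge\ell$ with $\ell$ a literal of $B(a,\mathbf x_1)$, and likewise for $v_b$ and $\mathbf x_2$. Your extra terms $\neg u_a\wedge(\mathbf x_1=a)$ are harmless but unnecessary.
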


\begin{proof}
    
We construct a polynomial-time algorithm that takes as input a DNF
$\psi(\mathbf{x})$ with $n$ variables and $m$ clauses, and outputs a
$4$-CNF $\phi(\mathbf{x}, \mathbf{y})$ and a tuple $Q \in \{
\exists, \forall \}^{|\mathbf{y}|}$, such that the following two conditions are
satisfied:

    \begin{enumerate}
        \item Equivalent formulas: for all assignments $\sigma: \mathbf{x} \to \{ 0, 1 \}$ we have  $\psi(\sigma) \iff Q(\mathbf{y}) \phi(\sigma)$, \label{cond:equivalent}
        \item Bounded number of existential variables in $\mathbf{y}$: $|\{ q_i \in Q \mid q_i = \exists \}| \leq 6\log_2(n+m) + O(1)$. \label{cond:few-existential}
    \end{enumerate}

Before we proceed, let us explain why such an algorithm establishes the
theorem. Suppose we take as input a $3$-DNF formula $\psi$ on $n$ variables and
we want to decide if $\psi$ is valid. One way to do this would be to give it as
input to our algorithm and produce an equivalent 4-QBF instance, where the $n$
variables of $\psi$ are first in the ordering and universally quantified. The
algorithm would run in time $2^{2^{o(k)}} \cdot n^{O(1)}$, and since $k=O(\log
n)$ the running time would be $2^{o(n)}$. However, such an algorithm is
impossible under the ETH.

Let us then prove that such an algorithm exists.  We use induction on $n+m$, or
more precisely, we describe a recursive algorithm which has a base case when
$n+m$ is small (say $n+m<2^{60}$, where we just pick this constant for
concreteness and ease of presentation without trying to optimize) and otherwise
performs a construction which calls the same algorithm on a smaller DNF.  

For the base case, assume $n+m \leq 2^{60}$.  Since $\psi$ is ``small'' we can
convert into a 4-CNF by brute force. Namely, we construct a clause per
falsifying assignment as follows.  Enumerate all the $2^n$ assignments
$\sigma:\mathbf{x} \to \{ 0, 1 \}$ and, if $\phi(\sigma)$ is False, encode it
as CNF formula:

    \[ \mathcal{C}_\mathbf{x} = \left\{ (l_1 \vee \dots \vee l_n ) \mid \exists \sigma:\mathbf{x} \to \{ 0, 1 \} : \text{ $\phi(\sigma)$ is False, and }
        l_i =
        \begin{cases}
            x_i,      &\text{ if } \sigma(x_i) = 0,\\
            \neg x_i,  &\text{ else.}
        \end{cases}
    \right\} \]  

In other words, for each falsifying assignment of $\psi$ we construct a clause
all of whose literals evaluate to False if we use this assignment.  Let
$\phi(\mathbf{x}) = \bigwedge_{c \in \mathcal{C}_\mathbf{x}} c$.  Condition
\ref{cond:equivalent} is satisfied, since $\psi(\sigma)$ is true if and only if
$\phi(\sigma)$ is true, but the new formula is not in $4$-CNF yet, as it
contains at most $2^n$ clauses, each of arity $n$. We introduce at most $n2^n$
new existential variables which allow us to break down the clauses into clauses
of arity $4$ -- in particular there exists an assignment to the new variables
satisfying the formula if and only if the original formula was True.  Since
$n<2^{60}$, the new formula has $O(1)$ existential variables, so now all
conditions are satisfied.

Assume now $n+m>2^{60}$ and assume that the algorithm works correctly for all
pairs $n',m' \in \mathbb{N}$ with $n' + m' < n + m$.  We further assume that $m
= 4^\ell$, as we can otherwise repeat clauses, increasing $m$ at most by a
factor of $4$ (we explain below why this is not a problem).

Suppose the terms of $\psi(\mathbf{x})$ are numbered $\mathcal{T} = \{ T_0,
\dots, T_{m-1} \}$.  Lampis and Mitsou~\cite{LampisM17} defined an algorithm that,
given $\psi(\mathbf{x})$, constructs an equivalent QBF formula
$\psi'(\mathbf{x}, \mathbf{y})$ in CNF using a binary encoding.  In particular,
we construct $\psi'$ by replacing every term $T_i = (l_1 \wedge \dots \wedge
l_d)$ by the $d$ clauses $\mathcal{C}_i=(l_1\lor B(i,\mathbf{y}))\land (l_2\lor
B(i,\mathbf{y}))\land\ldots \land (l_d\lor B(i,\mathbf{y}))$.  The formula
$\psi'(\mathbf{x},\mathbf{y})$ satisfies our conditions if all variables
$\mathbf{y}$ are existentially quantified (this is not hard to see and proved
in detail in~\cite{LampisM17}).  However, the arity of $\psi'$ is at least
$\log_2m + 1>4$. We will modify the construction to shrink the arity.  We
introduce the following variables, quantified in this order: 

    \begin{itemize}
        \item $\log_2m$ existential variables $\mathbf{y} = \{ y_1, \dots, y_{\log_2m} \}$ (these are the same as in \cite{LampisM17}),
        \item $2\sqrt{m}$ universal variables $\mathbf{z}^1 = \{ z^1_1, \dots, z^1_{\sqrt{m}} \}, \mathbf{z}^2 = \{ z^2_1, \dots, z^2_{\sqrt{m}} \}$,
        \item one existential variable $w$.
    \end{itemize}

Consider a bijection $\lambda:\{0,\ldots,m-1\} \to \{0,\ldots,\sqrt{m}-1\}
\times \{0,\ldots,\sqrt{m}-1\}$. In particular, let us use the natural
bijection $\lambda(i)=(\lfloor \frac{i}{\sqrt{m}}\rfloor, i\bmod\sqrt{m})$. We
write $\lambda(i)_1, \lambda(i)_2$ to refer to the two integers returned by
$\lambda$. We define a QBF $\phi(\mathbf{x}, \mathbf{y},
\mathbf{z}^1, \mathbf{z}^2, w)$ that consists of the following two parts.  In
the first part, which will be in $4$-CNF, for every term $T_i = (l_1 \wedge
\dots \wedge l_d)$, we introduce $d$ many clauses $\mathcal{C}_i(\mathbf{x},
\mathbf{z}^1, \mathbf{z}^2, w) = (l_1 \vee \neg z^1_{\lambda(i)_1} \vee \neg
z^2_{\lambda(i)_2} \vee w) \wedge \dots \wedge (l_d \vee \neg
z^1_{\lambda(i)_1} \vee \neg z^2_{\lambda(i)_2} \vee w)$.  The idea here is
that the variable vector $\mathbf{z}^1$ represents $y_1,\dots,y_{\log_2m/2}$,
and $\mathbf{z}^2$ represents $y_{\log_2m/2+1},\dots,y_{\log_2m}$ in the sense
that the universal player is ``supposed to'' set both
$z^1_{\lambda(i)_1},z^2_{\lambda(i)_2}$ to True only if the clause
$B(i,\mathbf{y})$ evaluates to False, that is, if the assignment to
$\mathbf{y}$ is the binary representation of $i$.  Under this assumption, we
are using a pair of universal variables in each clause to replace the $\log m$
existential variables in the construction of~\cite{LampisM17}. We also add the
new existential variable $w$ everywhere to allow the existential player to have
a response in case the universal player does not play in the way described
above.

In the second part, where we check exactly if the universal player played as
expected, we define a DNF formula by:

    \[ \mathcal{D}'(\mathbf{y}, \mathbf{z}^1, \mathbf{z}^2, w) = (\neg w) \vee \bigvee_{i \in [\log_2m/2], j \in \{0,\ldots,\sqrt{m}-1\}} (z^1_j \wedge l_{i,j}) \bigvee_{i \in [\log_2m/2], j \in \{0,\ldots,\sqrt{m}-1\}} (z^2_j \wedge l'_{i,j}), \]
    where we set $\{ l_{1,j}, \dots, l_{\log_2m/2,j} \} = B(j, (y_1,\ldots,y_{log_2m/2}))$ and also $\{ l'_{1,j}, \dots, l'_{\log_2m/2,j} \} = B(j, (y_{\log_2m/2+1},\ldots,y_{log_2m}))$.

The idea here is the following: if $z^1_{j_1}$ and $z^2_{j_2}$ are both set to
True, we consider the clause $B({\sqrt{m}}\cdot j_1+j_2,\mathbf{y})$, where
${\sqrt{m}}\cdot j_1+j_2$ is the number we obtain by concatenating the binary
representations of $j_1$ and $j_2$. If this clause contains a True literal,
then $\mathcal{D}'$ contains a True term other than $\neg w$. Therefore, if the
universal player wants to avoid setting a term of $\mathcal{D}'$ to True, she
can set at most one pair of $z^1_{j_1},z^2_{j_2}$ variables to True, namely,
those variables such that $B({\sqrt{m}}\cdot j_1+j_2,\mathbf{y})$ evaluates to
False.

Using both parts, we define
    \[ \phi'(\mathbf{x}, \mathbf{y}, \mathbf{z}^1, \mathbf{z}^2, w) = \bigwedge_{i \in [m]} \mathcal{C}_i(\mathbf{x}, \mathbf{z}^1, \mathbf{z}^2, w) \wedge \mathcal{D}'(\mathbf{y}, \mathbf{z}^1, \mathbf{z}^2, w). \]

Note that $\phi'$ is not in CNF yet. Before reducing it to such, however, we show that $\phi'$ fulfils Condition \ref{cond:equivalent}.
    We show that for all $\sigma : \mathbf{x} \to \{ 0, 1 \}$ we have $ \psi(\sigma) \iff \exists \mathbf{y} \forall \mathbf{z}^1 \forall \mathbf{z}^2 \exists w \phi'(\mathbf{x}, \mathbf{y}, \mathbf{z}^1, \mathbf{z}^2, w)$.

Let $\psi(\sigma)$ be satisfied.  That is, there exists $T_i = (l_1 \wedge
\dots \wedge l_d) \in \psi$ such that all literals of $T_i(\sigma)$ are true.
Choose $\sigma':\mathbf{y} \to \{ 0, 1 \}$, $\sigma'(y_j) = 1$ if $\neg y_j \in
B(i, \mathbf{y})$, else $\sigma'(y_j) = 0$. In other words, we choose $\sigma'$
so that $B(i,\mathbf{y})$ is False and for all $i'\neq i$ we have
$B(i',\mathbf{y})$ is True. We analyze the outcomes of all possible choices for
$\mathbf{z}^1$ and $\mathbf{z}^2$.

\begin{itemize}

\item If $z^1_j = 0$ for all $j \in \{0,\ldots,\sqrt{m}-1\}$ or $z^2_j = 0$ for
all $j \in \{0,\ldots,\sqrt{m}-1\}$, then every clause in $\mathcal{C}_j$ is
satisfied for all $j \in \{0,\ldots,m-1\}$. The existential player then chooses
$w = 0$ and $\mathcal{D}$ is satisfied.

\item If $z^1_{j_1} = z^2_{j_2} = 1$ for some $j_1, j_2 \in
\{0,\ldots,\sqrt{m}-1\}$, such that $\lambda(i) \neq (j_1, j_2)$, then a term
of $\mathcal{D}'$ other than $(\neg w)$ is True.  In particular, if
$\lambda(i)_1 \neq j_1$, the term $(z^1_{j_1} \wedge l_{t, j_1})$ is true for
some $t \in [\log_2{m}/2]$ (analogously for $\lambda(i)_2 \neq j_2$). The
existential player can thus choose $w = 1$ to satisfy all $\mathcal{C}_j$.

\item If $z^1_{j_1} = z^2_{j_2} = 1$ only for a specific pair $j_1, j_2 \in
\{0,\ldots,\sqrt{m}-1\}$, such that $\lambda(i) = (j_1, j_2)$ (and all other
$z$ variables are False), then $\mathcal{C}_i$ is true, because all literals of
$T_i$ are true.  For all $i' \neq i$, $\mathcal{C}_{i'}$ is True, because the
other $z$ variables are False.  If the existential player sets $w = 0$, we also
satisfy $\mathcal{D}$.

    \end{itemize}

Thus, $\exists \mathbf{y} \forall \mathbf{z}^1 \forall \mathbf{z}^2 \exists w
\phi'(\sigma)$ is true and we have established one direction.
    
Let now $\exists \mathbf{y} \forall \mathbf{z}^1 \forall \mathbf{z}^2 \exists w
\phi'(\sigma)$ be true.  For the given $\mathbf{y}$, consider $i =
\sum_{j=1}^{\log_2m} y_j 2^{j-1}$.  We further choose $z^1_{j_1} = z^2_{j_2} =
1$ for the pair $j_1, j_2 \in \{0,\ldots,\sqrt{m}-1\}$ such that $\lambda(i) =
(j_1, j_2)$, (in particular $j_1=\lfloor\frac{i}{\sqrt{m}}\rfloor$ and
$j_2=i\bmod \sqrt{m}$) and we choose all other $z$ variables to be $0$.  Since
$\phi'(\sigma)$ is True, it must be True for this assignment to $\mathbf{z}^1,
\mathbf{z}^2$. But for this assignment, the existential player is forced to set
$w$ to False, as otherwise $\mathcal{D}'$ has no True term. Therefore,
$\mathcal{C}_i$ must be true for this choice of $\mathbf{z}^1, \mathbf{z}^2$,
so all the literals of the term $T_i$ must be True.   Hence, $\phi(\sigma)$ is
satisfied.

The formula $\phi'$ now consists of a $4$-CNF part
$\mathcal{C}_i(\mathbf{x}, \mathbf{z}^1, \mathbf{z}^2, w)$ ($i\in\{
0,\dots,m-1\}$) and of a DNF part $\mathcal{D}'(\mathbf{y},
\mathbf{z}^1, \mathbf{z}^2, w)$.  Note that $\mathcal{D}'$ contains at most $n'
\leq 2 \sqrt{m} + \log_2m + 1$ many variables and $m' \leq 2\sqrt{m}\log_2m+1$
many terms.  If we had initially increased $m$ to make it a power of $4$, these
bounds become $n'\le 4\sqrt{m}+\log_2m+3$ and $m'\leq
4\sqrt{m}\log_2m+4\sqrt{m}+1$. We therefore have $n'+m'\leq 4\sqrt{m}\log_2m +
8\sqrt{m} + log_2m+3 \le 16 \sqrt{m+n}\log_2(m+n)$. However, when $m+n>2^{60}$
we have $16\sqrt{m+n}\log_2(m+n)<(m+n)^{2/3}$.  Thus, using induction (that is,
a recursive call to our algorithm) on $\mathcal{D}'$, we can replace it by an
equivalent QBF in $4$-CNF $\mathcal{D}(\mathbf{y},
\mathbf{z}^1, \mathbf{z}^2, w, \mathbf{y}')$, using at most
$6\log_2((m+n)^{2/3}) + O(1) = 4\log_2(m+n) + O(1)$ many existential variables.
We finally define

    \[ \phi(\mathbf{x}, \mathbf{y}, \mathbf{z}^1, \mathbf{z}^2, w, \mathbf{y}') = \bigwedge_{i \in [m]} \mathcal{C}_i(\mathbf{x}, \mathbf{z}^1, \mathbf{z}^2, w) \wedge \mathcal{D}(\mathbf{y}, \mathbf{z}^1, \mathbf{z}^2, w, \mathbf{y}')  \]

The formula $\phi$ is a $4$-QBF-CNF and uses at most
$\log_2m + (4 \log_2(m+n) + O(1)) \leq 6\log_2(m+n) + O(1)$ many additional
existential variables.  Hence, both Conditions \ref{cond:equivalent} and
\ref{cond:few-existential} are fulfilled.  \end{proof}

\subsection{Almost-Tight Lower Bound for Two Quantifier Blocks}

Analogously to the previous section, we define a helpful function that
translates integers to clauses. Because we will construct a family of
reductions, one for each $d\ge 3$, we will need a family of such functions. Let
$t\ge 1$ be an integer, $\mathbf{x}^1$, $\mathbf{x}^2,\ldots,\mathbf{x}^t$ be
$t$ tuples of $n$ boolean variables each, numbered $x^j_0,\ldots,x^j_{n-1}$ for
$j\in[t]$, and $i$ an integer in $\{0,\ldots,n^t-1\}$. We define
$B_1(i,\mathbf{x}^1)=\neg x^1_i$ and
$B_t(i,\mathbf{x}^1,\ldots,\mathbf{x}^t)=B_{t-1}(\lfloor
\frac{i}{n}\rfloor,\mathbf{x}^1,\ldots,\mathbf{x}^{t-1})\lor \neg x^t_{i\bmod
n}$. Another way to see this function is that if we want to express the integer
$i<n^t$ with a clause using variables from $t$ tuples of $n$ variables, we
write $i$ in base $n$, and each digit gives us the index of a variable of a
tuple, which we then add negated to the clause. Intuitively this function will
be useful, as it allows us, as we increase the arity, to use fewer variables to
express the same integers. 

\begin{theorem}\label{thm:singleLB}
    For all $d \geq 3$, there exists no algorithm solving \feqbf{d} with $k$ existential variables in time $2^{o(k^{d-1})}\cdot n^{O(1)}$, unless the ETH is false.
\end{theorem}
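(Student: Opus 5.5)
We reduce from the validity of a $3$-DNF $\psi(\mathbf{x})$ on $n$ variables with $m=O(n)$ terms. By the ETH and the sparsification lemma (applied to $\neg\psi$, which is a $3$-CNF), this cannot be decided in time $2^{o(n)}$. Fix $d\ge 3$ and set $t=d-1$ and $N=\lceil m^{1/t}\rceil$. Pad $\psi$ by repeating terms so that it has exactly $N^t$ terms $T_0,\dots,T_{N^t-1}$; this increases $m$ by at most a constant factor depending on $d$. We introduce $t$ tuples $\mathbf{y}^1,\dots,\mathbf{y}^t$ of $N$ existential variables each, so $k=tN=O_d(m^{1/(d-1)})$. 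Following the strategy of \cite{LampisM17}, every term $T_i=(l_1\wedge\dots\wedge l_s)$, with $s\le 3$, is replaced by the clauses $(l_j\lor B_t(i,\mathbf{y}^1,\dots,\mathbf{y}^t))$ for $j\in[s]$. Each such clause has arity $1+t=d$. The resulting instance is $\forall\mathbf{x}\,\exists\mathbf{y}\,\phi$.

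The intended semantics is that the existential player picks, in each tuple $\mathbf{y}^j$, exactly one variable to set to True. Reading these choices as base-$N$ digits encodes an index $i$. The clause $B_t(i,\cdot)$ is then falsified exactly when the chosen variables are $y^1_{i_1},\dots,y^t_{i_t}$, so only the clauses of $T_i$ are ``activated''. In that case the clauses force all literals of $T_i$ to be True. Correctness then consists of two directions for every assignment $\sigma$ to $\mathbf{x}$.

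\textbf{If $\psi(\sigma)$ is true.} Some term $T_i$ is satisfied. Set exactly the $t$ variables indexed by the digits of $i$ to True and all others to False. Then every clause $B_{t}(i',\cdot)$ with $i'\ne i$ contains a negated variable that is False, so it is satisfied. The clauses of $T_i$ are satisfied by the literals of $T_i$.

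\textbf{If $\phi(\sigma)$ is satisfiable.} Take a satisfying assignment of $\mathbf{y}$. If some tuple $\mathbf{y}^j$ is all False, every clause is trivially satisfied. This is the main obstacle: the existential player can cheat by setting an entire tuple to False, or by setting several variables in a tuple to True. Setting several variables True only activates more terms, which is harmless. To block the all-False cheat, I add for each $j$ the clause $(y^j_0\lor\dots\lor y^j_{N-1})$. This has large arity, so I must handle it differently. My first attempt is to keep arity at most $d$ by splitting each such clause with a chain of fresh existential variables. That would add $N$ variables per tuple and keep $k=O(N)$, but a chain makes the disjunction satisfiable trivially unless it is designed carefully. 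The standard splitting $(y_0\lor y_1\lor u_1)\wedge(\neg u_1\lor y_2\lor u_2)\wedge\dots$ does preserve satisfiability equivalence, because all the new variables are existential. It therefore works and adds $O(N)$ existential variables with arity $3\le d$. With this clause in place, at least one variable per tuple is True, so some index $i$ has all its digit-variables True. Then $B_t(i,\cdot)$ is False, and the clauses of $T_i$ force $T_i(\sigma)$ to be true, hence $\psi(\sigma)$ is true.

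Finally, the total number of existential variables is $k=O_d(N)=O_d(m^{1/(d-1)})$, and the instance size is polynomial in $n$. An algorithm running in time $2^{o(k^{d-1})}|\phi|^{O(1)}$ would therefore decide validity of $\psi$ in time $2^{o(m)}n^{O(1)}=2^{o(n)}$, using $m=O(n)$. This contradicts the ETH.
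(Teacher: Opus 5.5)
Your proposal is correct and follows essentially the same route as the paper's proof: the base-$N$ encoding $B_{d-1}$ over $d-1$ tuples of $m^{1/(d-1)}$ existential variables, the clauses $\mathcal{D}$ forcing at least one True variable per tuple, and splitting those long clauses with $O(m^{1/(d-1)})$ fresh existential variables. The one thing you leave implicit is that the assignment in your first direction (exactly one True variable per tuple) also satisfies the clauses $\mathcal{D}$ you add later and their chain encoding, but this is immediate.
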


\begin{proof}

We describe a polynomial-time algorithm that takes as input a
DNF $\psi(\mathbf{x})$ with $n$ variables and $m=O(n)$ terms,
and produce a $d$-CNF $\phi(\mathbf{x}, \mathbf{y})$, such that
the following two conditions are satisfied:

\begin{enumerate}
   \item Equivalent formulas: for all assignments $ \sigma:\mathbf{x} \to \{0, 1\}$ we have $ \psi(\sigma) \iff \exists \mathbf{y} \phi(\sigma)$, \label{cond:equivalent-1}
   \item The number of existential variables in $\mathbf{y}$ is upper bounded by $O(m^{1/(d-1)})$. \label{cond:few-existential-1}
\end{enumerate}

Before we proceed, let us explain why such an algorithm establishes the
theorem. Suppose we are given a $3$-DNF formula of $n$ variables and
$m = O(n)$ clauses $\psi$ and want to decide if $\psi$ is valid.  Using the
reduction above, there exists an equivalent \feqbf{d} $\phi$ with
$O(n^{1/(d-1)})$ many existential variables.  Thus, an algorithm in time
$2^{o(k^{d-1})}$ for \feqbf{d} can decide if a given $3$-DNF is
valid in time $2^{o(n)}$, which contradicts the ETH.

Without loss of generality, we can assume that the number of terms $m$ of
$\psi$ satisfies that $m^{\frac{1}{d-1}}$ is an integer, as we can achieve this
by adding repeated clauses at most $O(n)$ times.  We introduce $(d-1) \cdot
m^{1/(d-1)}$ existential variables $\mathbf{y}^1, \dots \mathbf{y}^{d-1}$,
where each $\mathbf{y}^i$ is of size $m^{1/(d-1)}$.  Number the terms in $\psi$
by $\mathcal{T} = \{ T_0,\dots, T_{m-1} \}$.  Consider the natural bijection
$\lambda: \{0,\ldots,m-1\} \to \{0,\ldots,m^{1/(d-1)-1}\}^{(d-1)}$ defined for
each $i$ by writing $i$ in base $m^{1/(d-1)}$.  For every term $T_i = (l_{j_1}
\wedge \dots \wedge l_{j_{d'}})$  we define the clauses
$\mathcal{C}_i(\mathbf{x}, \mathbf{y}^1, \dots, \mathbf{y}^{d-1}) = (l_{j_1}
\vee B_{d-1}(i,\mathbf{y}^1,\ldots,\mathbf{y}^{d-1})) \wedge \dots \wedge
(l_{j_{d'}}\vee B_{d-1}(i,\mathbf{y}^1,\ldots,\mathbf{y}^{d-1}) )$. Observe
that these clauses all have arity $d$.

Furthermore, we also add the clauses
$\mathcal{D}(\mathbf{y}^1,\dots,\mathbf{y}^{d-1}) = (y^1_1 \vee \dots \vee
y^1_{m^{1/(d-1)}}) \wedge \dots \wedge (y^{d-1}_1 \vee \dots \vee
y^{d-1}_{m^{1/(d-1)}})$. Observe that these clauses have arity $m^{1/(d-1)}$,
which is too high, but we will fix this in the end.

    We define
    \[ \phi(\mathbf{x}, \mathbf{y}^1, \dots, \mathbf{y}^{d-1}) = \bigwedge_{j \in \{0,\ldots,m^{1/(d-1)}-1\}} \mathcal{C}_{j}(\mathbf{x}, \mathbf{y}^1, \dots, \mathbf{y}^{d-1}) \wedge \mathcal{D}(\mathbf{y}^1, \dots, \mathbf{y}^{d-1}). \]

Note that $\phi$ fulfils Condition \ref{cond:few-existential-1}, since it uses
$(d-1) \cdot m^{1/(d-1)} = O(m^{1/(d-1)})$ many variables.  We now show that
$\phi$ fulfils Condition \ref{cond:equivalent-1}.  Let $\psi(\sigma)$ be True,
that is, a term $T_i \in \mathcal{C}$ is True. Then we can select
$y^1_{\lambda(i)_1} = \dots = y^{d-1}_{\lambda(i)_{d-1}} = 1$, and set all
other $y$ variables to $0$.  Then every clause in $\mathcal{D}$ is satisfied.
Furthermore, every clause in $\mathcal{C}_j$ with $j \neq i$ is satisfied,
because $B_{d-1}(j,\mathbf{y}^1,\ldots,\mathbf{y}^{d-1})$ is satisfied by our
assignment for all $j\neq i$.  Further, $\mathcal{C}_i$ is satisfied, because
$T_i$ is satisfied.  Thus, $\phi(\sigma)$ is true.

Let $\phi(\sigma)$ be true.  Since $\mathcal{D}(\mathbf{y}^1, \dots,
\mathbf{y}^{d-1})$ is true, at least one variable in each $\mathbf{y}^i$ is
True.  Let $y^1_{i_1}, \dots, y^{d-1}_{i_{d-1}}$ be such variables and let $i$
be the number obtained if we concatenate $i_1,\ldots,i_{d-1}$ and interpret
them as a number in base $m^{1/(d-1)}$. We observe that the clause
$B_{d-1}(i,\mathbf{y}^1,\ldots,\mathbf{y}^{d-1})$ is falsified by $\sigma$, so
if $\phi$ is satisfied, the clauses of $\mathcal{C}_i$ are True because of the
literals of $T_i$. Hence, a term of the original DNF $\psi$ is True for
$\sigma$.

Finally, let us explain how to fix the fact that the clauses of $\mathcal{D}$
have high arity. We have $d$ clauses of arity $m^{1/(d-1)}$. We can add fresh
existential variables and use the standard trick that replaces a long clause
$(l_1\lor l_2\lor l_3\lor\ldots\lor l_t)$ with two clauses $(l_1\lor l_2\lor
z)\land (\neg z\lor l_3\lor\ldots\lor l_t)$. Applying this repeatedly will
require a number of fresh variables bounded by $dm^{1/(d-1)}$, so the total
number of existential variables remains $O(m^{1/(d-1)})$.  \end{proof}

\section{Almost tight algorithm for \feqbf{d}}
\label{sec:fe-qbf}

In this section, we will show the following Theorem.

\begin{theorem}\label{thm:algo}
    There exists an algorithm that, given a \feqbf{d} $\phi = \forall\mathbf{y}
    \exists\mathbf{x}\, \psi$, verifies whether $\phi$ is True or False in time
    $O\left(k^{O_d(k^{d-1})}\right) \cdot |\phi|^{O(1)}$, where $k =
    |\mathbf{x}|$ is the number of existential variables.
\end{theorem}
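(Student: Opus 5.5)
We follow the strategy sketched in the introduction: classify clauses by their existential part, and in each class either find many clauses with pairwise disjoint universal parts, or obtain a small hitting set of universal variables on which we branch.

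\textbf{Classes.} Clauses with no universal literal are kept aside as a constraint on $\mathbf{x}$ alone. Clauses with no existential literal must all be satisfied by every universal assignment, so if any exists the formula is False; this is checked in polynomial time. Every remaining clause $C$ splits as $C = E_C \lor U_C$, with $E_C$ a nonempty set of at most $d-1$ existential literals and $U_C$ a nonempty set of universal literals. Group the clauses into classes by $E_C$. There are $N \le \sum_{j=1}^{d-1}\binom{2k}{j} = O_d(k^{d-1})$ classes. The universal player wins if and only if there is an assignment $\tau$ to $\mathbf{y}$ such that the set of surviving existential parts
\[
\{E_C : U_C \text{ falsified by } \tau\}
\]
together with the purely existential clauses is unsatisfiable. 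Whether she wins therefore depends only on which classes $\tau$ ``activates''. A class is activated when at least one of its clauses has its universal part falsified.

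\textbf{Dichotomy per class.} In each class, greedily compute a maximal collection of clauses whose universal parts are pairwise variable-disjoint and internally consistent (tautological clauses are discarded). Fix a threshold $t$, to be chosen as $\Theta_d(\log N)$ or $\Theta_d(k\log k)$.

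\emph{Large case.} Suppose every class has a collection of size at least $t$. A uniformly random $\tau$ falsifies a given $U_C$ with probability at least $2^{-|U_C|} \ge 2^{-(d-1)}$. Since the universal parts in a collection are disjoint, these events are independent, so the class stays inactive with probability at most $(1-2^{-(d-1)})^t$. A union bound over the $N$ classes shows that, for $t = c_d\log N$, some $\tau$ activates every class at once. Such a $\tau$ is the universal player's best possible move, since activating more classes only adds constraints. So the answer is True exactly when the set of all existential parts, together with the purely existential clauses, is satisfiable. We check this by brute force over $2^k$ assignments.

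\emph{Small case.} Otherwise some class has a maximal collection of size less than $t$. By maximality, the set $H$ of variables appearing in the universal parts of that collection meets every clause of the class, and $|H| \le (d-1)t$. We branch on all $2^{|H|}$ assignments to $H$ and simplify. Each branch either kills the class outright or turns some of its clauses into ones with shorter universal parts, possibly with no universal part at all, which become purely existential. We then recurse on the simplified instance.

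\textbf{Main obstacle: bounding the recursion.} The delicate point is showing the recursion terminates within $k^{O_d(k^{d-1})}$ branches. The measure I would use is the vector, over classes and over universal-part lengths $\ell \le d-1$, recording whether the class still has clauses of universal length $\ell$. The argument runs in two steps.

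\begin{itemize}
\item A branch on a class makes progress in that class: every clause of the class loses at least one universal literal, so the maximal universal length in the class strictly decreases, or the class disappears.
\item A clause whose universal length drops to $0$ joins the purely existential part and is never branched on again.
\end{itemize}

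If this monotonicity holds, the recursion depth is at most $N\cdot d = O_d(k^{d-1})$. Each level branches $2^{O_d(t)}$ ways, giving $2^{O_d(t\cdot k^{d-1})}$ leaves. With $t = O_d(\log k)$, since $\log N = O_d(\log k)$, this is $k^{O_d(k^{d-1})}$ as required.

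The catch is that branching on $H$ also shortens clauses in \emph{other} classes. This only decreases their universal lengths, so the potential is monotone globally. I would formalize this by recursing on the pair consisting of the class and its current maximal universal length. If this potential argument leaks, I would fall back on making the threshold depend on the universal length, similar to a sunflower-style inductive argument.
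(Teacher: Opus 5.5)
Your proposal is correct and is essentially the paper's proof. Both use the partition by existential core, greedy disjoint collections with a $\Theta_d(\log k)$ threshold, and the random-assignment union bound showing the universal player can activate every class (so the answer is the satisfiability of the cores). Both also branch on the hitting set, with potential $\sum_C$ (maximum universal length in class $C$), which is exactly the paper's weight $w(\psi)$.

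The monotonicity you were unsure about does hold: assigning universal variables never changes a clause's core and only removes or shortens clauses, so your fallback is unnecessary.
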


By \Cref{thm:singleLB}, this runtime is almost tight, (i.e., tight up to a
logarithmic factor of $k$ in the exponent). We provide a recursive algorithm
and prove its correctness and time complexity by induction.

Our algorithm begins as follows: Given the $d$-CNF $\psi$, we partition the
clauses into groups according to their existential literals. More precisely
define for all sets $C \subseteq \{ x, \neg x \mid x \in \mathbf{x} \}$ the
following two sets.

\begin{align*}
    G^\psi_C &= \left\{ c \in \psi \mid c \cap \{ x, \neg x \mid x \in
        \mathbf{x} \} = C \vphantom{G^\psi_C} \right\},\\
    \mathcal{S}^{\psi}_C &= \left\{ c\setminus C \mid c\in G^{\psi}_C \right\}.
\end{align*}

We call the sets $G^\psi_C$ groups of clauses. They define a partition
$\mathcal{P}_\psi$ of the clauses, that is, $G^\psi_C$ contains exactly the
clauses whose existential part is the clause $C$.  The sets
$\mathcal{S}^{\psi}_C$ consist of the clauses in $G^\psi_C$ restricted to their
universal variables. We call $C$ the \textit{existential core} (or simply core)
of the clause $c \in \psi$ if $c \in G_C$.

In the algorithm, we utilize a ``size threshold'' $X = X(k, d) = 2^d d \ln k$,
whose choice of value will be explained in the proofs below. Our algorithm is
recursive. The base case is defined as the following situation: for every core
$C$, the set family $\mathcal{S}^{\psi}_C$ either consists of the empty clause
(this happens if $\psi$ has a clause with only existential variables), or
contains a large number of pairwise disjoint clauses (i.e., at least $X$
clauses that share no universal variables).  The first step of our algorithm is
to greedily try to construct a maximal set of pairwise disjoint clauses in each
$\mathcal{S}^{\psi}_C$. If this succeeds, that is we find such sets of size $X$
for all $\mathcal{S}^{\psi}_C$, then we are in the base case.  We then remove
from all clauses all universal variables (that is, we replace each clause with
its core) and return True if and only if the resulting
(existentially-quantified) formula is satisfiable.  This base case is then
solved in time $2^k |\psi|^{O(1)}$. We will present an argument based on the
probabilistic method justifying why this reduction is correct, that is, the
universal player has a strategy to reduce the formula to its existential core,
which is clearly optimal for her.

If the greedy construction of disjoint sets does not succeed for some
$\mathcal{S}^{\psi}_C$, we are in the  recursion step, and observe that we have
a small hitting set $H$, that is, a set of universal variables of size at most
$dX$ that intersects all clauses of $\mathcal{S}^{\psi}_C$ (this is formed
simply by the union of the maximal set of disjoint clauses).  In this case, we
consider all possible assignments $\sigma: H \to \{ 0, 1 \}$, and run the
algorithm on all formulas $\psi(\sigma)$. If one of the recursive calls returns
that the formula is False, we return False. The key observation is that,
because we have a hitting set, in each recursive call we have reduced the arity
of all clauses of $\mathcal{S}^{\psi}_C$, and this allows us to bound the
recursion depth.

Throughout we will assume that $\psi$ contains no clauses of only universal
variables, since such a clause is either a tautology and can safely be removed,
or the universal player can find a winning strategy and falsify the formula.
It follows that all cores are non-empty.

\subsection{Base Case and Probabilistic Argument}

As mentioned the first step of our algorithm is to greedily construct a maximal
set of clauses for each $\mathcal{S}^{\psi}_C$ which share no universal
variables.  For the base case, suppose this step succeeds in constructing, for
each set $\mathcal{S}^{\psi}_C \neq \{ \emptyset \}$, such a set of at least
$X$ clauses sharing only their core.

Using the probabilistic method, we provide an optimal strategy for the
universal player. Given the formula $\psi(\mathbf{y}, \mathbf{x})$, we
construct a new formula $\psi|_\mathbf{x}(\mathbf{x})$ by replacing each clause
of $\psi$ by a new clause:
\[ \psi \ni (l_{\mathbf{x}, 1} \vee \dots \vee l_{\mathbf{x},i} \vee
    l_{\mathbf{y}, 1} \vee \dots \vee l_{\mathbf{y}, j}) \mapsto
    (l_{\mathbf{x}, 1} \vee \dots \vee l_{\mathbf{x}, i}) \in
    \psi|_\mathbf{x}, \]
where $l_{\mathbf{x}, 1}, \dots, l_{\mathbf{x}, i} \in \{x, \neg x \mid x \in
\mathbf{x}\}$ and $l_{\mathbf{y}, 1}, \dots, l_{\mathbf{y}, j} \in \{y, \neg y
\mid y \in \mathbf{y}\}$. That is, we reduce all clauses of $\psi$ to their
core literals (and then of course retain only one copy of each resulting clause).

\begin{lemma}\label{lem:reduce}

Let $\psi(\mathbf{y}, \mathbf{x})$ satisfy the property that for all
$\mathcal{S}^{\psi}_C\neq\{\emptyset\}$ there exists a set of at least $X$
disjoint clauses in $\mathcal{S}^{\psi}_C$ sharing no universal variables.
Then, we have \[ \forall\mathbf{y}\exists \mathbf{x} \psi \iff \exists
\mathbf{x} \psi|_\mathbf{x}.  \]

\end{lemma}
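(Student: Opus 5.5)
The easy direction is ($\Leftarrow$). Every clause of $\psi|_\mathbf{x}$ is a sub-clause of a clause of $\psi$, so a satisfying assignment of $\psi|_\mathbf{x}$ satisfies $\psi$ for every $\mathbf{y}$. The existential player can therefore ignore $\mathbf{y}$ and play it.

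For ($\Rightarrow$) we argue the contrapositive. Suppose $\psi|_\mathbf{x}$ is unsatisfiable; we exhibit a single assignment $\tau$ to $\mathbf{y}$ such that $\psi(\tau)$ is unsatisfiable. It suffices to find $\tau$ such that for every core $C$ there is some clause $c\in G^\psi_C$ all of whose universal literals are falsified by $\tau$. Then $\psi(\tau)$ contains the clause $C$ for each core, so $\psi(\tau)\supseteq\psi|_\mathbf{x}$ as a set of clauses and is unsatisfiable. Cores with $\mathcal{S}^\psi_C=\{\emptyset\}$ need nothing, since the clause $C$ itself is already in $\psi$.

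To find $\tau$ we use the probabilistic method. Choose $\tau$ uniformly at random over $\{0,1\}^{\mathbf{y}}$. Fix a core $C$ with its $X$ clauses of $\mathcal{S}^\psi_C$ that pairwise share no universal variable. Each has at most $d$ universal literals (in fact at most $d-1$, since cores are nonempty), so each is entirely falsified with probability at least $2^{-d}$. We may assume no clause contains both $y$ and $\neg y$, as such clauses are tautologies that could have been removed. Since the clauses are variable-disjoint, these events are independent. Hence the probability that none of them is falsified is at most $(1-2^{-d})^X\le e^{-X/2^d}=e^{-d\ln k}=k^{-d}$. The number of possible cores is the number of nonempty sets of at most $d$ literals over $k$ variables, which is at most $\sum_{i=1}^{d}\binom{2k}{i}$. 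A union bound then gives failure probability at most $k^{-d}\sum_{i\le d}\binom{2k}{i}$, and we need this to be $<1$.

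The main obstacle is that this count is roughly $(2k)^d/d!$, which is not obviously below $k^d$ for small $d$ or small $k$. We would tighten the estimate using that universal parts have size at most $d-|C|$: for a core of size $s$ the failure probability is at most $(1-2^{-(d-s)})^X\le k^{-d2^{s}}$. Summing $\binom{2k}{s}k^{-d2^s}$ over $s\ge1$ gives about $2k\cdot k^{-2d}+\dots<1$ for $k\ge2$ and $d\ge 3$. The case $k=1$ (or any trivially small $k$) is handled directly or by adjusting constants in $X$. With failure probability below one, some $\tau$ exists, which completes the contrapositive.
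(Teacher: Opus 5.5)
Your proof is correct and is essentially the paper's argument: the trivial direction, then a uniformly random assignment to $\mathbf{y}$, where independence of the $X$ clauses with pairwise disjoint universal parts gives per-group failure probability at most $k^{-d}$, and a union bound over cores shows that with positive probability every core appears as a clause of $\psi(\tau)$. Your sign-aware count of cores and the sharper per-group bound $k^{-d2^{|C|}}$ are more careful than the paper, which just bounds the number of non-trivial groups by $\binom{k}{d-1}\le k^{d-1}$ to get total failure probability below $1/k$; also, for $k=1$ one has $X=0$, so the hypothesis is vacuous and the equivalence genuinely fails (e.g.\ for $(x\lor y)\land(\neg x\lor\neg y)$), so only your option of enlarging $X$, not a direct argument, handles that case --- a defect the paper's statement shares.
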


\begin{proof}

We observe that one direction is trivial, that is $\exists \mathbf{x}
\psi|_\mathbf{x}\Rightarrow \forall\mathbf{y}\exists \mathbf{x} \psi$, because
if the existential player can satisfy the cores of all clauses she can clearly
satisfy the original formula. Therefore, we want to show that $\neg\exists
\mathbf{x} \psi|_\mathbf{x}\Rightarrow \neg\forall\mathbf{y}\exists \mathbf{x}
\psi$ or equivalently that $\forall \mathbf{x} \neg\psi|_\mathbf{x}\Rightarrow
\exists\mathbf{y}\forall \mathbf{x} \neg\psi$. Suppose then that
$\psi|_\mathbf{x}$ is not satisfiable and we want to find a winning strategy
for the universal player. We show that there is a strategy whose outcome is to
simplify the formula into $\psi|_\mathbf{x}$ by showing that if we pick a
random assignment this outcome has positive probability.

    Choose an assignment $\sigma: \mathbf{y} \to \{ 0, 1 \}$ uniformly at
    random. That is, every $y \in \mathbf{y}$ gets assigned to True
    independently with probability 1/2.
    Let $C \subseteq \{x. \neg x \mid x \in \mathbf{x} \}$ and let
    $\mathcal{S}^{\psi}_C\neq\{\emptyset\}$. By assumption, the family
    $\mathcal{S}^{\psi}_C$ contains at least $X$ pairwise disjoint sets. That
    is, the group $G^{\psi}_C$ contains $X$ clauses $C_1, \dots, C_{\lceil
    X\rceil}$ whose universal variables are pairwise disjoint.

    Given an assignment $\sigma$, we call $c \in \psi(\sigma)$
    \textit{active}, if the assignment does not satisfy $c$. We call
    the clause \textit{inactive}, if the assignment satisfies $c$.
    If $c$ is inactive, it will be removed from the formula, while if it
    is active, it will be reduced to its core, because we have assigned values
    to all universal variables.

    For every $C_j$, we have $\Pr[C_j(\sigma) \text{ active}] \geq (1/2)^d$,
since for the clause to be active, it suffices for $\sigma$ to set all the (at
most $d$) universal literals to False. We thus have $\Pr[C_j(\sigma) \text{
inactive}] \leq (1 - (1/2)^d)$, and

    \begin{align*}
        \Pr[\text{all $C_j(\sigma)$ are inactive}]
            &\leq (1 - (1/2)^d)^{2^d d \ln k}\\
            &< e^{-d\ln k}\\
	    &= \frac{1}{k^d}, \end{align*} where we substituted $X = 2^d d \ln
k$ and used the fact that if clauses share no variables, then the event of one 
being active is independent from the others.  

We now apply this inequality by a union bound to all groups
$\mathcal{S}^\psi_C$.  If a group has a core of size $d$, then the
whole clause consists of existential variables. In this case, the set family
$\mathcal{S}^\psi_C$ consists of the empty clause. Thus, the number of
non-trivial groups is at most $\binom{k}{d-1} \leq k^{d-1}$, because each core
contains at most $d-1$ of the $k$ existential variables.  Using this bound, we
obtain

    \begin{align*}
        &\hphantom{\leq} \Pr\left[\bigvee_{C \subseteq \{ x, \neg x \mid x \in
        \mathbf{x} \}, \mathcal{S}^{\psi}_C \neq \{\emptyset\}} \text{max.\
        disjoint set of $\mathcal{S}^{\psi}_C$ has only inactive members}\right]\\
        &\leq \sum_{C \subseteq \{ x, \neg x \mid x \in \mathbf{x} \},
        \mathcal{S}^{\psi}_C \neq \{\emptyset\}} \Pr[\text{max.\ disjoint set of
        $\mathcal{S}^{\psi}_C$ has only inactive members}]\\
        &< k^{d-1} \cdot \left(\frac{1}{k^d}\right) = \frac{1}{k}.
    \end{align*}
    Hence, by the probabilistic method, there exists a strategy for the
    universal player that reduces all clauses of $\psi$ to their core.
\end{proof}

\subsection{Recursive Step and Analysis}

In the recursion step of the algorithm, there exists a group $G^\psi_C$ with set
family $\mathcal{S}^\psi_C$ that contains a hitting set $H$ of size $|H| \leq
dX$, in the sense that every clause in $\mathcal{S}^\psi_C$ contains a
universal variable of $H$. This follows because if we are not in the base case,
the maximal set of disjoint clauses we constructed for $\mathcal{S}^\psi_C$ has
size at most $X$ and intersects all other clauses of $\mathcal{S}^\psi_C$. 

The algorithm now enumerates all assignments $\sigma: H \to \{ 0, 1 \}$ and
recursively calls  itself on each $\psi(\sigma)$.

\begin{lemma}\label{lem:leaves} The recursion tree of the algorithm above has
at most $k^{O_d(k^{d-1})}$ leaves.  \end{lemma}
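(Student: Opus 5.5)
We bound separately the branching factor and the depth of the recursion tree, and multiply. Each recursive call enumerates all assignments $\sigma : H \to \{0,1\}$ with $|H| \le dX = d\,2^d d\ln k = O_d(\log k)$, so every internal node has at most $2^{dX} = k^{O_d(1)}$ children. It therefore suffices to show that every root-to-leaf path has length $O_d(k^{d-1})$. The number of leaves is then at most $(k^{O_d(1)})^{O_d(k^{d-1})} = k^{O_d(k^{d-1})}$.

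For the depth bound we use a potential function on the multiset of cores and their \emph{universal arities}. For each core $C$ with $\mathcal{S}^{\psi}_C \neq \{\emptyset\}$, let $a(C)$ be the maximum number of universal literals in a clause of $G^\psi_C$. Then $0\le a(C) \le d-|C| \le d-1$, and $a(C)=0$ means $\mathcal{S}^\psi_C = \{\emptyset\}$, i.e.\ the group has become trivial. Define $\Phi(\psi) = \sum_C a(C)$, summed over the at most $\sum_{j\le d-1}\binom{2k}{j} = O_d(k^{d-1})$ possible cores; we count literals, so a core is a set of literals on distinct variables. Hence $\Phi \le (d-1)\cdot O_d(k^{d-1}) = O_d(k^{d-1})$ initially.

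We first check that $\Phi$ never increases. Applying an assignment to universal variables only deletes clauses or shrinks their universal parts, and leaves cores unchanged, since $\sigma$ touches no existential variable. Thus clauses never migrate to a different group, and each $a(C)$ is non-increasing. Next we check that $\Phi$ strictly drops in each recursive step. The branch is triggered by a core $C$ whose greedy disjoint family has fewer than $X$ members, and $H$ hits every clause of $\mathcal{S}^\psi_C$. After assigning $H$, every surviving clause of $G^\psi_C$ has lost at least one universal literal, so $a(C)$ decreases by at least $1$. Consequently the depth is at most the initial value of $\Phi$, which is $O_d(k^{d-1})$.

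The main subtlety is two degenerate situations. First, an assignment may create a clause with only its core left, i.e.\ $a(C)$ reaches $0$. Such groups are exactly the $\{\emptyset\}$ case, which the base case permits, so termination is fine. Second, $\Phi$ might appear not to drop because $G^\psi_C$ becomes empty, but this only lowers $\Phi$ further. We also note that clauses consisting only of universal variables never arise, since cores are fixed and nonempty. Finally, in the base case the algorithm halts, so leaves are exactly the nodes where $\Phi$ stops decreasing. Combining depth $O_d(k^{d-1})$ with branching $2^{dX}=k^{d^2 2^d/\ln 2}$ gives the claimed bound $k^{O_d(k^{d-1})}$.
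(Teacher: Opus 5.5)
Your proof is correct and follows the paper's own argument: the same potential (the sum over cores of the maximum number of universal literals in a clause of the group), the same branching bound $2^{dX}$, and depth bounded by the initial potential $O_d(k^{d-1})$; your count of cores over sets of literals, $\sum_{j\le d-1}\binom{2k}{j}$, is a bit more careful than the paper's $\binom{k}{d-1}$, and you make explicit the monotonicity of the potential, which the paper leaves implicit. One small slip: $2^{dX}=2^{d^2 2^d\ln k}=k^{d^2 2^d\ln 2}$, not $k^{d^2 2^d/\ln 2}$; since $\ln 2<1/\ln 2$, your expression is still an upper bound and the conclusion is unaffected.
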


\begin{proof}

As our measure of progress, we define the weight of a formula $\psi$ by the
maximum number of universal variables in the clauses for each group:

    \[ w(\psi) = \sum_{C \subseteq \{ x, \neg x \mid x \in \mathbf{x} \}
        } \max_{V \in \mathcal{S}^\psi_C}
        |V|. \]

Before running the algorithm  all clauses have arity at most $d$. The number of
groups with non-zero weight is at most $k^{d-1}$, because in such a group all
clauses contain at least one universal variable, so at most $d-1$ of the $k$
existential variables.  Therefore, initially $w(\psi) \leq d\cdot k^{d-1}$.

Let $H$ be a hitting set of $\mathcal{S}^\psi_C$ and $\sigma: H\to\{0, 1\}$
be any assignment. Then we have $w(\psi(\sigma)) < w(\psi)$. Clearly, the arity
of all clauses of a group decreases after assigning a boolean value to each
variable in the hitting set.

Overall, in the recursion tree we branch with degree at most $2^{|H|} \leq 2^{dX}$
and the tree's depth is at most $w(\psi)$, so the number of leaves of the tree
is at most $(2^{dX})^{w(\psi)} \leq 2^{d^2 X(k,d) k^{d-1}} =
k^{O_d(k^{d-1})}$.
\end{proof}

Putting everything together we obtain the proof of the theorem of this section.

\begin{proof}[Proof of \Cref{thm:algo}.]

The algorithm proceeds as follows: First, we partition the clauses of $\psi$
into groups $G^\psi_C$ and compute the universal parts $\mathcal{S}^\psi_C$.
In each universal part (that does not simply consist of the empty clause) we
greedily construct a set of clauses whose variables are disjoint by starting
with an empty set and then arbitrarily selecting a clause to add to our set as
long as this is possible. If at some point our set contains at least $X$
clauses, we move on to the next group $\mathcal{S}^\psi_C$. This process can be
executed in polynomial time and can terminate in two ways:

\begin{enumerate}

\item The algorithm succeeds in constructing a set of $X$ disjoint clauses for
each group $\mathcal{S}^\psi_C$. In this case we construct $\psi|_\mathbf{x}$
and check if it is satisfiable (by going through all $2^k$ assignments). If it
is, we return True, otherwise we return False. Correctness of this step is
justified by Lemma~\ref{lem:reduce}.

\item The algorithm finds a group $\mathcal{S}^\psi_C$ and a set of fewer than
$X$ clauses of $\mathcal{S}^\psi_C$ such that all other clauses of
$\mathcal{S}^\psi_C$ share a universal variable with a clause of the set. Then,
the at most $dX$ variables of the clauses of the set form a hitting set $H$, in
the sense that all clauses of $\mathcal{S}^\psi_C$ contain at least one
variable of $H$. The algorithm branches into all assignments to the variables
of $H$ and returns False if and only if one branch returns False. This
branching is correct and produces a number of leaves bounded by
Lemma~\ref{lem:leaves}.

\end{enumerate}

Putting everything together, the running time is dominated by the number of
leaves of the recursion tree, which gives the stated running time.  \end{proof}

\section{Conclusions and Open Problems}

The main contribution of this paper is to resolve the open problem of
\cite{ErikssonLOOPR24} by showing that even for QBFs of arity $4$ the
double-exponential dependence of the running time on $k$ is necessary. This
leaves open as an intriguing open question the case of QBFs of arity $3$. We
conjecture that it should be possible to extend our lower bound to this case.

Furthermore, we showed that the complexity of $d$-QBF parameterized by $k$ does
become significantly better if we only allow two quantifier blocks
($\forall\exists$-QBF), as in this case we have an algorithm with parameter
dependence exponential in $k^{d-1}$, rather than double-exponential in $k$. On
the other hand, our double-exponential lower bound requires a small but
unbounded number of quantifier alternations to work (roughly $\log\log n$, as
each step of our construction replaces a formula of size $m$ with one of size
$O(\sqrt{m})$). This leaves open the question of what happens for a small but
constant number of alternations. Namely, is the complexity of solving
$\forall\exists\forall\exists$-QBF already double-exponential in $k$?

\bibliography{qbf}

\end{document}